\newcommand{\R}{\mathbb R}
\newcommand{\1}{\mathds{1}}
\newcommand{\ket}[1]{| #1 \rangle}
\newcommand{\bra}[1]{\langle #1 |}
\newtheorem{theorem}{Theorem}
\newtheorem{corollary}{Corollary}
\newtheorem*{theorem*}{Theorem}
\theoremstyle{definition}
\newtheorem{example}{Example}
\newcommand{\tr}{\mbox{tr}\,}
\begin{document}
\title{Designing Bell inequalities from a Tsirelson bound}
\author{Michael Epping}
\email{epping@thphy.uni-duesseldorf.de}
\author{Hermann Kampermann}
\author{Dagmar Bru\ss}
\affiliation{Institut f\"{u}r Theoretische Physik III, Heinrich-Heine-Universit\"{a}t D\"{u}sseldorf, Universit\"{a}tsstr. 1, D-40225
D\"{u}sseldorf, Germany}
\pacs{03.65.Ud,03.67.Mn}

\begin{abstract}
We present a simple analytic bound on the quantum value of general correlation type Bell inequalities, similar to Tsirelson's bound. 
It is based on the maximal singular value of the coefficient matrix associated with the inequality. We provide a criterion for tightness
of the bound and show that the class of inequalities where our bound is tight covers many famous examples from the literature. 
We describe how this bound helps to construct Bell inequalities, in particular inequalities that witness the dimension of the measured
observables.
\end{abstract}
\maketitle 
\noindent An interesting feature of quantum theory are correlations between outcomes of spatially separated measurements that contradict
predictions of all theories based on common-sense assumptions called \textit{Locality}, \textit{Reality} and \textit{Free Will}
~\cite{Bell1964,Weihs1998}. This contradiction is shown by the violation of Bell inequalities. A famous version of them was derived by
Clauser, Horne, Shimony and Holt (CHSH)~\cite{Clauser1969} and many
generalizations followed~\cite{Mermin1990,Braunstein1990,Cavalcanti2007,Vertesi2008,PhysRevA.64.014102,Gisin1999,CaslavMultiparty}. In
addition to ruling out local hidden variable theories, several other applications of Bell-type inequalities are
known~\cite{Brukner2004,PhysRevLett.67.661,Barrett2005,Pironio2010,selftestingmayers}.\\
Regarding such applications one is interested in the maximal value of the Bell expression predicted by quantum theory and the corresponding
measurements to achieve this optimum. Bounds on this quantum value were first derived by B. Tsirelson~\cite{Cirel'son1980,Tsirelson1993}. 
For general CHSH-type Bell inequalities (which will be defined later on), similar bounds can be derived. To this aim, approaches based on
different physical principles have been developed, under them
\textit{information causality}~\cite{PhysRevA.80.040103,Pawlowski,PhysRevLett.107.210403}, \textit{macroscopic
reality}~\cite{Navascues2009}, uncertainty principles~\cite{Oppenheim2010} and \textit{exclusivity}~\cite{Cabello2013}. Furthermore methods
based on semidefinite programming are known~\cite{Wehner2006,PhysRevLett.98.010401,Navascues2008,2008arXiv0803.4373D}. In contrast here we
present an analytical method to find a quantum bound, which makes use of standard
tools of linear algebra only.\\
Our bound is related to the optimization of Ref.~\cite{Wehner2006} with relaxed boundary conditions, which implies that our bound
is not necessarily reachable. However, the class of Bell inequalities reaching our bound contains most examples from the literature. We
introduce a constructive method to determine whether the bound is tight, which provides a geometric picture that allows to construct new Bell
inequalities. We exemplify this by constructing dimension witnessing Bell inequalities, analogous to the ones discussed
in~\cite{Vertesi2008,Vertesi2009,PhysRevLett.100.210503,PhysRevLett.105.230501,PhysRevLett.111.030501}. Different techniques to witness the
dimension of a quantum system are described in~\cite{PhysRevLett.102.190504,PhysRevA.78.062112}. Our construction of
new Bell
Inequalities differs from known methods based on the correlation polytope~\cite{Pitowsky1986,Peres1999} and variable
elimination~\cite{Avis2004,Budroni}.\\
We start with considering general bipartite correlation type inequalities, where the two parties $i=1,2$ measure $M_i$ different two-outcome
observables $\mathcal{A}_i(x_i)$, with $x_i=1,2,...,M_i$, on their part of the shared quantum state $\rho$ (see \cite{Peres1999} for an
overview). The principal setup of
an experiment associated with such an inequality is visualized in Fig.~\ref{fig:Setup}.%
\begin{figure}[tbp]%
 \centering%
 \includegraphics[width=\linewidth]{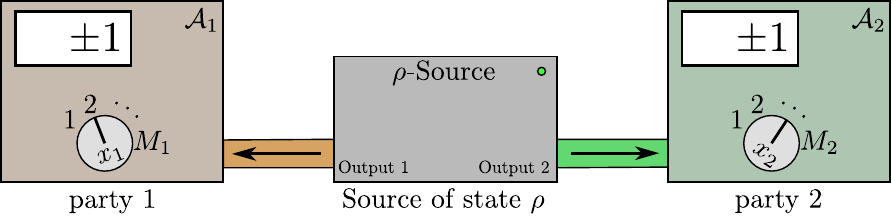}%
 \caption{\label{fig:Setup}Illustration of a bipartite Bell experiment. The source prepares the state $\rho$ and distributes one subsystem
to each party.
Each party $i$ can choose between $M_i$ different measurement settings ($\mathcal{A}_i(x_i)$, $x_i=1,...M_i$). Multiplying the two results
of both parties, which are $+1$ or $-1$, and repeating the experiment many times gives the expectation value $E(x_1,x_2)$. Bounds on linear
combinations of $E(x_1,x_2)$ for different $x_1$ and $x_2$ are discussed in the text.}%
\end{figure}%
The expectation value of the product of the measurement results of both parties in setting $x_1$ of party $1$ and setting $x_2$ of party
$2$ is denoted by $E(x_1,x_2)$. In any local and realistic theory the inequality
\begin{equation}
 \sum_{x_1=1}^{M_1}\sum_{x_2=1}^{M_2} g_{x_1,x_2} E_{\text{lr}}(x_1,x_2) \leq B, \label{eq:ineq}
\end{equation}
holds, where $g_{x_1,x_2}$ are real coefficients of a matrix $g$ and $B$ is the corresponding local hidden variable bound. 
It can be obtained by maximizing over all possible local realistic expectation values $E_{\text{lr}}(x_1,x_2)=a_1(x_1) a_2(x_2)$, where
$a_i(x_i)=\pm 1$ is the measurement result in setting $x_i$ of party $i$. 
Throughout this paper we are interested in similar bounds $T$ on the quantum value $Q$,
\begin{equation}
Q:=\max_{\rho,\mathcal{A}_1,\mathcal{A}_2} 
 \sum_{x_1=1}^{M_1}\sum_{x_2=1}^{M_2} g_{x_1,x_2} E(x_1,x_2) \leq T, \label{eq:ineqT}
\end{equation}
where $E(x_1,x_2)=\tr (\rho \mathcal{A}_1(x_1)\otimes \mathcal{A}_2(x_2))$ is the expectation value predicted by quantum theory. 
If the quantum value $Q$ violates Ineq.~(\ref{eq:ineq}), i.e. $Q>B$, we call Ineq.~(\ref{eq:ineq}) a Bell inequality.\\
We now derive an upper bound $T$ on the quantum value $Q$ using the singular value decomposition of 
the coefficient matrix $g$ (see Eq.~(\ref{eq:ineq})). For any real $M_1\times M_2$-matrix $g$ we define an orthogonal $M_1\times
M_1$-matrix $V,$ a diagonal $M_1\times M_2$-matrix $S$, containing the singular values, and an orthogonal $M_2\times M_2$-matrix $W$, such
that
\begin{equation}
 g=VSW^T. \label{eq:svdvong}
\end{equation}
We use the convention of nonincreasing order on the diagonal of $S$. The matrices $V$ and $W$ are uniquely defined up to unitary operations
on spaces associated with degenerate singular values. The maximal singular value $S_{11}$ will be written as $||g||_2$, the spectral norm of
$g$, which is defined as $||g||_2=\max_{\vec{x}, |\vec{x}|=1} |g \vec{x}|$. The multiplicity of $||g||_2$, i.e. the dimension of the 
corresponding space, is denoted by $d$. We will also use the truncated singular value decomposition associated with the maximal singular
value only. In this case the matrices are denoted $V^d$, $S^d$ and $W^d$. See Fig.~\ref{fig:matrixdimensions} for an illustration of the
dimensions of the involved matrices.\\
\begin{figure}[tbp]
 \begin{center}
  \includegraphics[width=0.6\linewidth]{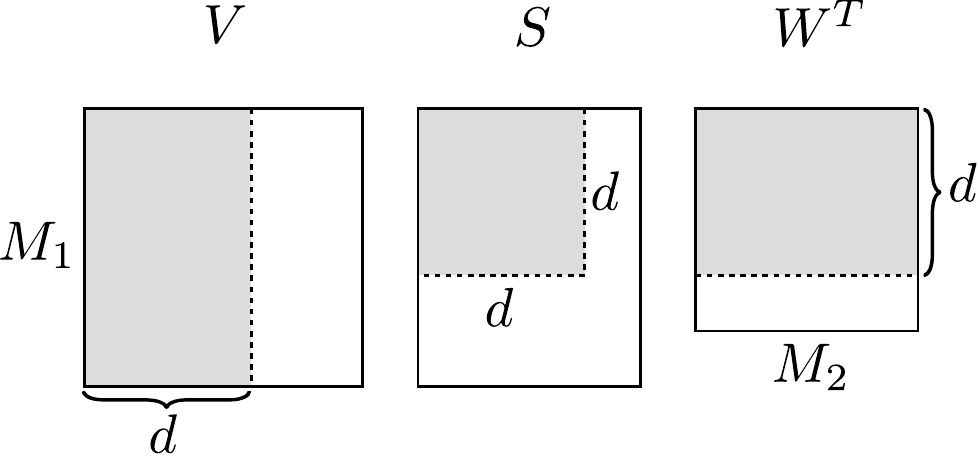}
 \end{center}
\caption{\label{fig:matrixdimensions} The dimensions of the matrices $V$, $S$ and $W$. The shaded parts belong to the truncated singular
value decomposition ($V^d$, $S^d$ and $W^d$) for the maximal singular value.}
\end{figure} 
With these definitions we can formulate the quantum bound for inequality~(\ref{eq:ineq}).
\begin{theorem}\label{thm:bound}
Let there be two parties, labeled with $i=1,2$, sharing a state given by a density matrix $\rho$, i.e. a positive semidefinite $D\times
D$-matrix, $D\in\mathds{N}$, with $\tr \rho=1$. Let $\{\mathcal{A}_i(x_i):1\leq x_i \leq M_i\}$ be a set of observables with all eigenvalues
in $[-1,1]$ on the subsystem of party $i$. The expectation value in setting $(x_1, x_2)$ is
\begin{equation}
 E(x_1,x_2)=\tr\left(\mathcal{A}_1(x_1)\otimes \mathcal{A}_2(x_2)\rho\right).
\end{equation}
For real coefficients $g_{x_1,x_2}$ the bound
 \begin{equation}
  \sum_{x_1=1}^{M_1}\sum_{x_2=1}^{M_2} g_{x_1,x_2} E(x_1,x_2)\leq \sqrt{M_1 M_2}||g||_2 =: T \label{eq:bound}
 \end{equation}
holds, where $||g||_2$ is the maximal singular value of $g$.
\end{theorem}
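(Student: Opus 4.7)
\emph{Proof plan.} My strategy is to represent each $E(x_1,x_2)$ as an inner product of short vectors and then exploit the orthogonality of the SVD factors of $g$. First, I would purify $\rho$ to a pure state $\ket{\Psi}$ on a larger Hilbert space and define $\ket{u_{x_1}}:=(\mathcal{A}_1(x_1)\otimes\1)\ket{\Psi}$ and $\ket{v_{x_2}}:=(\1\otimes\mathcal{A}_2(x_2))\ket{\Psi}$ (with identities silently extended to the purifying register). A direct computation gives $\braket{u_{x_1}}{v_{x_2}}=E(x_1,x_2)$, and the hypothesis that each $\mathcal{A}_i(x_i)$ has spectrum in $[-1,1]$ implies $\mathcal{A}_i(x_i)^2\preceq\1$, so that $\|\ket{u_{x_1}}\|\leq 1$ and $\|\ket{v_{x_2}}\|\leq 1$. (Equivalently, purification can be avoided by viewing $(\mathcal{A}_1(x_1)\otimes\1)\sqrt{\rho}$ as a vector in the Hilbert--Schmidt space.)

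Next I would plug in $g_{x_1,x_2}=\sum_k V_{x_1,k}S_{kk}W_{x_2,k}$ and collect terms to obtain $\sum_{x_1,x_2}g_{x_1,x_2}E(x_1,x_2)=\sum_k S_{kk}\braket{\tilde u_k}{\tilde v_k}$, where $\ket{\tilde u_k}:=\sum_{x_1}V_{x_1,k}\ket{u_{x_1}}$ and $\ket{\tilde v_k}:=\sum_{x_2}W_{x_2,k}\ket{v_{x_2}}$. Applying Cauchy--Schwarz in the ambient Hilbert space to each term, and then using $S_{kk}\leq S_{11}=\|g\|_2$, yields $\sum_{x_1,x_2}g_{x_1,x_2}E(x_1,x_2)\leq \|g\|_2\sum_k \|\ket{\tilde u_k}\|\,\|\ket{\tilde v_k}\|$.

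The key step, and the one I expect to be the main obstacle to discover, is the final estimate. The orthogonality of $V$ (so $VV^T=\1$) gives $\sum_k\|\ket{\tilde u_k}\|^2=\sum_{x_1,x_1'}\delta_{x_1,x_1'}\braket{u_{x_1}}{u_{x_1'}}=\sum_{x_1}\|\ket{u_{x_1}}\|^2\leq M_1$, and analogously $\sum_k\|\ket{\tilde v_k}\|^2\leq M_2$ from orthogonality of $W$. One last Cauchy--Schwarz in $k$ then gives $\sum_k\|\ket{\tilde u_k}\|\,\|\ket{\tilde v_k}\|\leq\sqrt{M_1 M_2}$, which combined with the previous step proves the stated bound $T=\sqrt{M_1M_2}\,\|g\|_2$. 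The naive pointwise estimate $|\braket{\tilde u_k}{\tilde v_k}|\leq\sqrt{M_1 M_2}$ on its own would only yield the much weaker $\sqrt{M_1M_2}\sum_k S_{kk}$; the saving, which is what makes the spectral norm of $g$ appear rather than its nuclear norm, comes precisely from using the orthogonality of the SVD rotations in this final step.
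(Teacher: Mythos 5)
Your proof is correct, and its skeleton is the same as the paper's: represent each $E(x_1,x_2)$ as an inner product $\braket{u_{x_1}}{v_{x_2}}$ of vectors of norm at most one, and then bound the bilinear form $\sum_{x_1,x_2}g_{x_1,x_2}\braket{u_{x_1}}{v_{x_2}}$ by the largest singular value of $g$ times the Euclidean norms of the stacked vectors, which are at most $\sqrt{M_1}$ and $\sqrt{M_2}$. The two places where you diverge are worth noting. First, the paper obtains the vector representation by quoting Tsirelson's theorem (real unit vectors in $\R^{M_1+M_2}$) after restricting to pure states; you instead construct the vectors directly as $(\mathcal{A}_1(x_1)\otimes\1)\ket{\Psi}$ and $(\1\otimes\mathcal{A}_2(x_2))\ket{\Psi}$ on a purification, which needs only the elementary direction of Tsirelson's argument, handles mixed states without a separate convexity remark, and replaces the equality $\|\vec v_{x_1}\|=1$ by the inequality $\|\ket{u_{x_1}}\|\le 1$ coming from $\mathcal{A}_i(x_i)^2\preceq\1$ --- which is all the upper bound requires. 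Second, where the paper simply invokes that the maximal singular value of $g\otimes\1^{M_1+M_2}$ equals $\|g\|_2$ and bounds $\vec V^T(g\otimes\1)\vec W\le\|g\|_2\,|\vec V|\,|\vec W|$, you unroll that same estimate explicitly: expand $g$ in its SVD, apply Cauchy--Schwarz termwise, and then use $VV^T=\1$, $WW^T=\1$ plus one more Cauchy--Schwarz over the singular-value index to recover $\sqrt{M_1M_2}$. Your final remark is apt: it is exactly this last orthogonality step that produces the spectral norm rather than the nuclear norm $\sum_k S_{kk}$. The net effect is a self-contained proof of the same bound; the paper's version is shorter but leans on Tsirelson's theorem and a standard fact about spectral norms of Kronecker products as black boxes.
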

\begin{proof}
As the maximal value of the Bell inequality is achieved by a pure state, it is sufficient to focus on these. The basic idea is to use
a well-known result of Tsirelson~\cite{Cirel'son1980} to map physical observables to real vectors and bound the resulting expression using
their length and the maximal singular value of $g$. In order to prevent confusion, the notation of Tsirelson's theorem is adopted to the one used
here.
\begin{theorem*}[Tsirelson~\cite{Cirel'son1980}]
 Given sets of observables $\mathcal{A}_1(1),...,\mathcal{A}_1(M_1)$ and $\mathcal{A}_2(1),...,\mathcal{A}_2(M_2)$, whose eigenvalues lie
in $[-1,1]$, and an arbitrary bipartite state $\ket{\psi}\in\mathcal{H}_1\otimes\mathcal{H}_2$, there exist real unit
vectors $\vec{v}_1,...,\vec{v}_{M_1},\vec{w}_1,...,\vec{w}_{M_2}\in \R^{M_1+M_2}$ such that for all settings $x_1\in\{1,...,M_1\}$ and
$x_2\in \{1,...,M_2\}$ the expectation value can be written as
\begin{equation}
 E(x_1,x_2)=\bra{\psi} \mathcal{A}_1(x_1) \otimes \mathcal{A}_2(x_2) \ket{\psi} = \vec{v}_{x_1}^T \vec{w}_{x_2}. \label{eq:Eundvw}
\end{equation}
\end{theorem*}
\noindent This theorem ensures one can write
\begin{align}
 \sum_{x_1=1}^{M_1} \sum_{x_2=1}^{M_2} g_{x_1,x_2} E(x_1,x_2)=&\vec{V}^T (g\otimes \mathds{1}^{M_1+M_2}) \vec{W}^{\phantom{T}},
 \label{eq:bellmitvundw}
\end{align}
where we introduced the vectors 
\begin{equation}
 \vec{V}=\left(\begin{array}{c}
          \vec{v}_1\\
 \vdots\\
  \vec{v}_{M_1}
\end{array}\right)\mbox{ and }\vec{W}=\left(\begin{array}{c}
          \vec{w}_1\\
 \vdots\\
  \vec{w}_{M_2}
\end{array}\right). \label{eq:defVundW}
\end{equation}
The relation between these vectors and the matrices $V$ and $W$ from the singular value decomposition of $g$ will become clear in
Theorem~\ref{thm:tightness}. From Eq.~(\ref{eq:bellmitvundw}) we see, that $Q$ can be bounded by use of the maximal singular value of
$(g\otimes \1^{M_1+M_2})$, which is the same as the maximal singular value of $g$, and the length of $\vec{V}$ and $\vec{W}$. Because the
$\vec{v}_i$ and $\vec{w}_j$ are normalized vectors, the lengths of $\vec{V}$ and $\vec{W}$ are $\sqrt{M_1}$ and $\sqrt{M_2}$, respectively.
This finishes the proof.
\end{proof}%
\noindent The bound in Theorem~\ref{thm:bound} holds for any inequality given by an arbitrary real matrix $g$. But so far we did not
discuss the quality of the bound and indeed not for all matrices $g$ the bound is achievable (see example~\ref{ex:nottight} in
Appendix~\ref{app:examples}). In the next theorem we give a necessary and sufficient
condition for tightness of our bound.
\begin{theorem}\label{thm:tightness}
For a given real $M_1\times M_2$-matrix $g$ and the corresponding matrices $V^d$ and $W^d$ (see Fig.~\ref{fig:matrixdimensions}),
the bound~(\ref{eq:bound}) can be reached with observables, which are linked via Eq.~(\ref{eq:Eundvw}) to $d'\leq d$-dimensional real
vectors $\vec{v}_i$ and $\vec{w}_j$ given by
\begin{align}
 \vec{v}_i =& \alpha^T V^d_{i,*}, \label{eq:obsv}\\
 \vec{w}_j =& \sqrt{\frac{M_2}{M_1}} \alpha^T W^d_{j,*},\label{eq:obsw}
\end{align}
if and only if the system of equations
\begin{align}
 ||\alpha^T V^d_{i,*} ||^2=&1 &\forall i\in\{1,2,...,M_1\} \label{eq:normv}\\
 ||\alpha^T W^d_{j,*} ||^2=&\frac{M_1}{M_2}&\forall j\in\{1,2,...,M_2\}\label{eq:normw}
\end{align}
is solvable. Here the $d\times d'$-matrix $\alpha$ is the unknown and $V^d_{i,*}$ and $W^d_{j,*}$ denote column vectors
containing the elements of the $i$-th row of $V^d$ and the $j$-th row of $W^d$, respectively. 
\end{theorem}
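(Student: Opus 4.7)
The plan is to prove the equivalence by analyzing exactly when the chain of inequalities used in Theorem~\ref{thm:bound} becomes tight. The Bell value equals $\vec V^T(g\otimes\1^{M_1+M_2})\vec W$, and the theorem follows from (i) the spectral-norm inequality $\vec V^T A\vec W\le\|A\|_2\,|\vec V|\,|\vec W|$ with $A=g\otimes\1^{M_1+M_2}$, and (ii) the Tsirelson normalization $|\vec v_i|=|\vec w_j|=1$. Saturating Theorem~\ref{thm:bound} therefore amounts to saturating (i) and (ii) simultaneously, and each direction of the equivalence reduces to this saturation analysis.

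\emph{Sufficiency.} Given a solution $\alpha$ of (\ref{eq:normv})--(\ref{eq:normw}), I would set $\vec v_i,\vec w_j$ as in (\ref{eq:obsv})--(\ref{eq:obsw}); the two conditions say precisely that the resulting $d'$-dimensional real vectors are unit vectors. The standard converse of Tsirelson's theorem (realize any finite family of real unit vectors by anticommuting Clifford-type observables acting on a sufficiently large maximally entangled state) then produces concrete $\rho$ and $\mathcal A_i(x_i)$ with $E(x_1,x_2)=\vec v_{x_1}^T\vec w_{x_2}$. It remains to evaluate the Bell expression: using $gW^d=\|g\|_2 V^d$ and $(V^d)^TV^d=I_d$ the double sum collapses as
\begin{equation*}
\sum_{i,j}g_{ij}\,\vec v_i^T\vec w_j=\sqrt{\tfrac{M_2}{M_1}}\,\tr\!\bigl((W^d)^T g^T V^d\alpha\alpha^T\bigr)=\sqrt{\tfrac{M_2}{M_1}}\,\|g\|_2\,\tr(\alpha\alpha^T),
\end{equation*}
while (\ref{eq:normv}) together with $(V^d)^TV^d=I_d$ gives $\tr(\alpha\alpha^T)=\sum_i\|\alpha^T V^d_{i,*}\|^2=M_1$, yielding exactly $T=\sqrt{M_1M_2}\,\|g\|_2$.

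\emph{Necessity.} Suppose the bound is attained. Tsirelson's theorem supplies unit vectors with $\vec V^T(g\otimes\1)\vec W=T$, so both (i) and (ii) are tight. Equality in (i) forces $\vec W$ into the maximal-singular-value subspace of $g\otimes\1$, which is $\mathrm{Range}(W^d)\otimes\R^{M_1+M_2}$, and pins $(g\otimes\1)\vec W$ parallel to $\vec V$. Expanding $\vec W=\sum_{k=1}^d W^d_{*,k}\otimes\vec\beta_k$ and using $gW^d=\|g\|_2V^d$ with $|\vec V|=\sqrt{M_1},\,|\vec W|=\sqrt{M_2}$ fixes $\vec V=\sum_k V^d_{*,k}\otimes\vec\alpha_k$ together with $\vec\beta_k=\sqrt{M_2/M_1}\,\vec\alpha_k$. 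Reading off the $i$-th block of $\vec V$ and the $j$-th block of $\vec W$ (which are, by Tsirelson, the vectors $\vec v_i$ and $\vec w_j$) and collecting the $\vec\alpha_k$'s as rows of a matrix $\alpha$ reproduces (\ref{eq:obsv})--(\ref{eq:obsw}) with $d'\leq d$ equal to the rank of $\alpha$; the unit-norm requirement $|\vec v_i|=|\vec w_j|=1$ then becomes exactly (\ref{eq:normv})--(\ref{eq:normw}).

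The main obstacle is bookkeeping: matching the tensor-block decomposition of $\vec V,\vec W$ with the matrix $\alpha$ (row-versus-column indexing, the auxiliary dimension $d'$), and threading through the asymmetric factor $\sqrt{M_2/M_1}$ that appears because $|\vec V|\neq|\vec W|$ generically. Beyond this, the argument is a clean saturation analysis of a spectral-norm inequality combined with the well-known forward and converse directions of Tsirelson's vector--observable correspondence.
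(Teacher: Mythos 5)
Your proposal is correct and follows essentially the same route as the paper's proof in Appendix~A: both characterize saturation of the bound by requiring $\vec V$ and $\vec W$ to be matching singular vectors of $g\otimes\1^{M_1+M_2}$ for the maximal singular value, expand them in the singular basis to obtain the parametrization $\vec v_i=\alpha^T V^d_{i,*}$, $\vec w_j=\sqrt{M_2/M_1}\,\alpha^T W^d_{j,*}$, and identify the unit-norm requirement from Tsirelson's theorem with Eqs.~(\ref{eq:normv})--(\ref{eq:normw}), handling the reduction from a $d\times(M_1+M_2)$ coefficient matrix to a $d\times d'$ one via the rank of $\alpha$ just as the paper does. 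Your explicit trace evaluation in the sufficiency direction is a slightly more computational rendering of the step the paper states as the matching-singular-vector condition, but it is the same argument.
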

The proof is given in Appendix~\ref{app:proofoftightness}. The main idea of the proof is, that the bound is reachable, if and only if there
exist singular vectors to the maximal singular value of $g\otimes\1^{M_1+M_2}$, $\vec{V}$ and $\vec{W}$ (see Eq.~(\ref{eq:defVundW})), where
the vectors $\vec{v}_i$ and $\vec{w}_j$ are unit vectors (see Tsirelson's theorem in Eq.~(\ref{eq:Eundvw})).\\
Note, that all vectors that
fulfill Eq.~(\ref{eq:normv}) lie on the surface of a $d$-dimensional origin-centered ellipsoid (see Fig.~\ref{fig:ellipse}). If the
vectors $V^d_{i,*}$ and $W^d_{j,*}$ permit to find an ellipsoid such that they all lie on it's surface, then the bound is tight. If semiaxes
are infinite, e.g. if the ellipsoid is not uniquely defined, then $d'<d$ and the corresponding $\alpha$ does not have full rank. In
particular $d'=1$ implies, that one-dimensional vectors reach the bound, the inequality~(\ref{eq:ineq}) cannot be violated and thus it is no
Bell inequality (see Fig.~\ref{fig:ellipseb}).
\begin{figure}[tbp]
 \subfigure[ ]{ \includegraphics[width=0.46\linewidth]{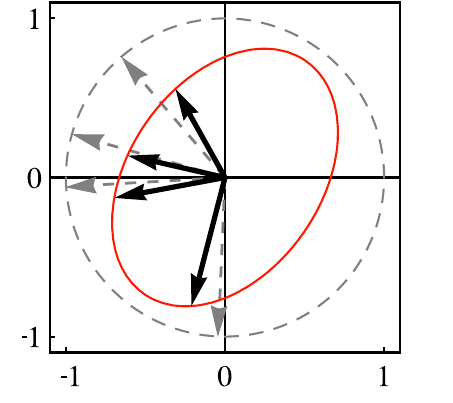} \label{fig:ellipsea}}
 \subfigure[ ]{ \includegraphics[width=0.46\linewidth]{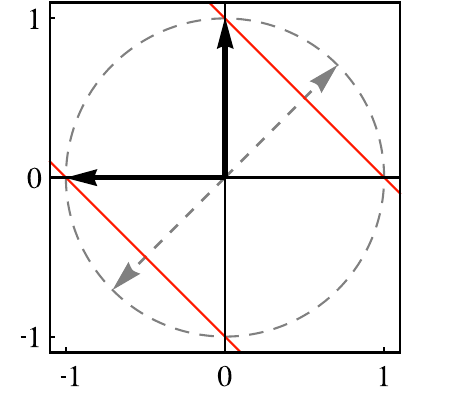} \label{fig:ellipseb}}
 \caption{\label{fig:ellipse}The vectors $V^d_{i,*}$, $i=1,2,...,M_1$ (black) can be normalized by applying the matrix $\alpha^T$, if
they lie on an origin-centered ellipsoid (red), i.e. the vectors $\vec{v}_i = \alpha^T V^d_{i,*}$ lie on the unit sphere (dashed).
An analogous picture could be drawn for $\vec{w}_j$, $j=1,2,...,M_2$. (a) In this example $d=2$ and $M_1=4$. The four vectors uniquely define an
ellipse. (b) In this example $d=2$ and $M_1=2$. This ellipse is not uniquely defined by the vectors $V^d_{i,*}$. The one shown has
one infinite semi-axis.}
\end{figure}
An algorithm solving Eqs.~(\ref{eq:normv}) and (\ref{eq:normw}) in $\mathcal{O}((M_1+M_2)^3)$ is described in Appendix~\ref{app:algorithm}.
From the real vectors $\vec{v}_i$ and $\vec{w}_j$ the observables can be obtained using representants
of a Clifford algebra, see~\cite{Tsirelson1993}.\\
In the following we provide two sufficient criteria for Ineq.~(\ref{eq:bound}) being tight.
\begin{corollary}\label{cor:quernormierung}
 If
\begin{align}
 ||V^d_{i,*}||=&\sqrt{\frac{d}{M_1}}\phantom{{},{}} \;\forall i\in\{1,2,...,M_1\} \\
\mbox{and } ||W^d_{j,*}||=&\sqrt{\frac{d}{M_2}}, \;\forall j\in\{1,2,...,M_2\}
\end{align}
then the bound is tight.
\end{corollary}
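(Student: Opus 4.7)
The plan is to reduce the claim to Theorem~\ref{thm:tightness} and exhibit an explicit solution $\alpha$ of the system of equations~(\ref{eq:normv})--(\ref{eq:normw}). Since the hypotheses of the corollary assert that \emph{all} row norms of $V^d$ share one common value, and similarly for $W^d$, the natural ansatz is an $\alpha$ that acts as a uniform rescaling. Concretely, I would try $d'=d$ and $\alpha = c\,\1^d$ for a single scalar $c\in\R$ to be determined.

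Under this ansatz, $\alpha^T V^d_{i,*}=c\,V^d_{i,*}$, so Eq.~(\ref{eq:normv}) becomes $c^2\,\|V^d_{i,*}\|^2 = c^2\,d/M_1 = 1$, which fixes $c^2 = M_1/d$ independently of $i$. Plugging the same $\alpha$ into Eq.~(\ref{eq:normw}) yields $c^2\,\|W^d_{j,*}\|^2 = (M_1/d)(d/M_2) = M_1/M_2$, and this holds for every $j$. The system is thus solvable, and Theorem~\ref{thm:tightness} immediately implies that the bound~(\ref{eq:bound}) is tight.

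There is really no substantive obstacle here; the calculation is one line once the scalar ansatz is committed to. The reason this simple $\alpha$ suffices is geometric: the hypothesis says that the vectors $V^d_{i,*}$ all lie on one origin-centered $d$-dimensional sphere (a degenerate ellipsoid whose semiaxes are all equal to $\sqrt{d/M_1}$), and likewise the $W^d_{j,*}$ on another sphere of radius $\sqrt{d/M_2}$. Hence the ellipsoid of Fig.~\ref{fig:ellipse} collapses to a sphere, and a uniform rescaling simultaneously sends both families to the unit sphere. If the hypothesis were relaxed to allow varying row norms, one would no longer be able to use a multiple of the identity and would have to produce a genuine non-isotropic $\alpha$, which is the content of the more general Theorem~\ref{thm:tightness}.
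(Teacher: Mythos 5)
Your proof is correct and follows essentially the same route as the paper: both take $\alpha$ proportional to the identity and verify Eqs.~(\ref{eq:normv})--(\ref{eq:normw}) by direct substitution. In fact your scalar $c=\sqrt{M_1/d}$ is the right one, whereas the paper's stated $\alpha=\frac{M_1}{d}\1^d$ appears to be a typo for $\sqrt{\frac{M_1}{d}}\,\1^d$.
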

\begin{proof}
 The matrix $\alpha=\frac{M_1}{d}\1^d$ solves the system of equations (\ref{eq:normv}) and (\ref{eq:normw}).
\end{proof}
A second corollary treats the special case when $g$ is a square matrix and all singular values are the same.
\begin{corollary}\label{cor:equalsingularvalues}
 If $d=M_1=M_2$, then Ineq.~(\ref{eq:bound}) is tight.
\end{corollary}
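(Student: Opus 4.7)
The plan is essentially to reduce Corollary~\ref{cor:equalsingularvalues} to Corollary~\ref{cor:quernormierung} via a structural observation about the singular value decomposition. Under the hypothesis $d=M_1=M_2 =: M$, the multiplicity of the maximal singular value equals the full row and column dimension of $g$, so every singular value is equal and the truncated decomposition coincides with the full one: $V^d = V$ and $W^d = W$ are full orthogonal $M\times M$ matrices.

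Now I would invoke the defining property of orthogonal matrices, namely that $VV^T = \mathds{1}^M$, which forces every row to be a unit vector. Hence
\begin{equation*}
\|V^d_{i,*}\| = 1 = \sqrt{d/M_1} \qquad \text{and} \qquad \|W^d_{j,*}\| = 1 = \sqrt{d/M_2}
\end{equation*}
for all $i, j$, so the hypotheses of Corollary~\ref{cor:quernormierung} hold and the bound is tight. Equivalently, one can exhibit the solution directly: the $d\times d$ identity matrix $\alpha = \mathds{1}^d$ reduces Eqs.~(\ref{eq:normv}) and (\ref{eq:normw}) to the unit-row-norm identity for $V$ and $W$, which is satisfied because $M_1/M_2 = 1$.

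There is no real obstacle here; the corollary is essentially the observation that the "square degenerate" case automatically satisfies the sufficient condition of Corollary~\ref{cor:quernormierung}. The only thing to be careful about is noting why the full SVD and the truncated SVD coincide under the assumed equality $d = M_1 = M_2$, which follows from the convention of nonincreasing order on the diagonal of $S$ combined with the multiplicity being maximal.
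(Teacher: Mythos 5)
Your proposal is correct and matches the paper's own proof, which likewise observes that for $d=M_1=M_2$ the truncated SVD is the full SVD and that the orthogonality of $V$ and $W$ makes $\alpha=\1^d$ a solution of Eqs.~(\ref{eq:normv}) and (\ref{eq:normw}); your additional framing as a special case of Corollary~\ref{cor:quernormierung} is a harmless repackaging of the same observation, since that corollary's proof is also just the exhibition of a scalar multiple of the identity.
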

\begin{proof}
Due to the orthogonality of $V$ and $W$, $\alpha=\1^d$ solves the system of equations (\ref{eq:normv}) and (\ref{eq:normw}).
\end{proof}
An application of this corollary is illustrated in the following example.
\begin{example}\label{ex:multipleCHSH}
Inequalities with coefficients
\begin{equation}
g= \left(
\begin{array}{cc}
 1 &  1 \\
 1 & -1
\end{array}
\right)^{\otimes k}
\end{equation}
are considered in~\cite{Heydari2006}, where for $k=2$ an upper bound of $4\sqrt{10}$ for the quantum value is given.
Ineq.~(\ref{eq:bound}) improves this bound to $T=8$, which coincides with the local realistic bound $B$. Note that
Corollary~\ref{cor:equalsingularvalues} states, that the bound $T(k)=2^{3 k/2}$ is tight for all $k$. It can be easily seen, that for all
even $k$, the classical value coincides with the quantum bound, i.e. the inequality is no Bell inequality. For odd $k$ numerical evidence
indicates, that the violation vanishes. Therefore we do not expect the violation to reach $Q/B=\sqrt{3}$ in the limit of large
$k$, different to the conjecture in~\cite{Heydari2006}. A value of $Q/B=\sqrt{3}$ would be near the maximal violation (Grothendieck's constant)
for any bipartite full correlation Bell experiment~\cite{Fishburn1994}. Please note that the well-known CHSH inequality is incorporated as
the special case with $k=1$.
\end{example}
Several more examples are given in Appendix~\ref{app:examples}, amongst them the famous CHSH
inequality~\cite{Clauser1969}
(example~\ref{ex:chsh}) and inequalities by Braunstein and Caves~\cite{Braunstein1990} (example~\ref{ex:BraunsteinCaves}), Vertesi and
P\'al~\cite{Vertesi2008} (example~\ref{ex:Vertesi}), Gisin~\cite{Gisin1999} (example~\ref{ex:GEQ}) and Fishburn and
Reeds~\cite{Fishburn1994} (example~\ref{ex:FishburnReeds}).\\
\indent The presented method can be generalized to more than two parties. All $n$-party Bell inequalities considered here are of the form
\begin{equation}
 \sum_{x_1,...,x_n=1}^{M_1,...,M_n} g(x_1,...,x_n) E_{\text{lr}}(x_1,...,x_n)\leq B.
\end{equation}
Each party $i$ receives a subsystem from the source and measures it in a setting $x_i\in\{1,2,...,M_i\}$. Suppose a time order
such that all but the first two parties do this before party one and two. Then the setup is exactly the same as
considered before, where the bipartite state is obtained by tracing out parties three to $n$. Formalizing
this one sees that
\begin{equation}
 \sum_{x_1,...,x_n=1}^{M_1,...,M_n} g(x_1,...,x_n) E(x_1,...,x_n) \leq T\\[1ex]
\end{equation}
with
\begin{equation}
 T=\sqrt{M_1 M_2} \sum_{x_3,...,x_n}^{M_3,...,M_n} ||g_{*,*,x_3,...,x_n}||_2. \label{eq:multipartite}
\end{equation}
Here $g_{*,*,x_3,...,x_n}$ denotes the matrix found in the $n$th-order tensor $g$ by fixing all but the first two indices. In general labeling
different parties as 1 and 2 leads to different values of the bound.
\begin{example}[Mermin-Inequality]\label{ex:mermin}
 The Mermin-Inequality is given by coefficients
\begin{equation}
  g(x_1,...,x_n)=\cos\left(\frac{\pi}{2}(x_1+x_2+...+x_n)\right).
\end{equation}
Eq.~(\ref{eq:multipartite}) gives the bound
\begin{equation}
 T=2 \sum_{x_3,...,x_n} \underbrace{||g_{*,*,x_3,...,x_n}||_2}_{=1} = 2^{n-1},
\end{equation}
which is achievable with a GHZ-state~\cite{Mermin1990}. Thus the bound is tight for this family of inequalities.
\end{example}
The insights on the mathematical structure gained above help to construct new Bell inequalities. We focus on the minimal
dimension of the involved observables required for the maximal violation. The dimension $d$ of the real vectors $\vec{v}_i$ and $\vec{w}_j$
is linked to the dimension of the corresponding observables $D$. Due to the explicit construction of observables in
Ref.~\cite{Tsirelson1993}, we know that 
\begin{equation}
 D\leq 2^{\lfloor d/2\rfloor}
\end{equation}
is possible, while it is also known~\cite{Vertesi2009}, that
\begin{equation}
 D\geq\left\lceil \frac{d+1}{2} \right\rceil
\end{equation}
is necessary.
We construct $g$ such that Eqs.~(\ref{eq:normv}) and (\ref{eq:normw}) are fulfilled for some matrix $\alpha$ with rank $d$. This implies
that the maximal violation can be achieved using $d$-dimensional real vectors $\vec{v}_i$ and $\vec{w}_j$. If in some experiment only
qubits ($D=2$) are available, then one can construct Bell inequalities with $d\leq 3$, assuring that the maximal violation is within the
scope of this experiment. This can be done by explicitly constructing the singular value decomposition of $g$, e.g.
\begin{equation}
 g= V \text{diag }\{2,2,2,1,...,1\} W^T, \label{eq:qubitconstruction}
\end{equation}
where $V$ and $W$ are unitary matrices, such that the conditions of Theorem~\ref{thm:tightness}, Corollary~\ref{cor:quernormierung} or
Corollary~\ref{cor:equalsingularvalues} are fulfilled.
\begin{example}[Inequality for qubits]\label{ex:qubits}
Consider the Bell inequality corresponding to a matrix $g$ given by Eq.~(\ref{eq:qubitconstruction}) for
\begin{align}
V=&\left(
\begin{array}{cc}
 \frac{1}{\sqrt{2}} & \frac{1}{\sqrt{2}} \\
 \frac{1}{\sqrt{2}} & -\frac{1}{\sqrt{2}}
\end{array}
\right)^{\otimes 2}\\
\mbox{and }W=&\left(
\begin{array}{cc}
 1 & 0 \\
 0 & -1
\end{array}
\right)\otimes\left(
\begin{array}{cc}
 \frac{1}{\sqrt{2}} & \frac{1}{\sqrt{2}} \\
 \frac{1}{\sqrt{2}} & -\frac{1}{\sqrt{2}}
\end{array}
\right).
\end{align}
By construction, the maximal quantum value is $Q=8$, while $B=4\sqrt{2}$ is the maximum achievable value 
within local hidden variable theories. From Eq.~(\ref{eq:qubitconstruction}) we know that $d=3$ and the maximal violation is achievable with qubits. Note that the singular value $S_{44}=1$ needs only to be smaller than $||g||_2=2$, i.e. it can also be chosen to be $0$.
\end{example}
Furthermore one might be interested in constructing Bell inequalities that cannot be violated by systems with dimension smaller than some
chosen dimension. Such Bell inequalities are a recent development called dimension
witnesses~\cite{Vertesi2008,Vertesi2009,PhysRevLett.100.210503,PhysRevLett.105.230501,PhysRevLett.102.190504,PhysRevA.78.062112,
PhysRevLett.111.030501}. Here the
unitary matrices $V$ and $W$ are constructed such that a rank $d$ solution $\alpha$ exists, but not
a rank $d-1$ solution. 
We can assume $\alpha$ to be a symmetric $d\times d$-matrix (see Appendix~\ref{app:algorithm}), i.e. $\alpha$ contains $ d (d+1)/2$ degrees
of
freedom. Therefore $d (d+1)/2$ vectors, that lie on a $d$-dimensional ellipsoid with finite semi-axes and lead to independent equations
(\ref{eq:normv}) or (\ref{eq:normw}), determine $\alpha$ and thus also it's rank to be $d$. 
 Note that the rows of both $V^d$ and $W^d$ form this set of vectors. The following simple construction illustrates this method.
\begin{example}[Random Dimension Witness]
 Given $d\in\mathds{N}$ greater or equal two, let $k=\lfloor (d-1)/2 \rfloor+1$ and $U_i$, $i\in\{1,..,k\}$, be random unitary $d\times d$
matrices. The inequality with coefficients given by the following $k d \times d$-matrix
\begin{equation}
 g=\left(\begin{array}{c}
          U_1\\
          U_2\\
\vdots\\
U_{k}
         \end{array}\right) \label{eq:randomdimwit}
\end{equation} 
corresponds to a Bell inequality. Note that the truncated singular value decomposition of $g$ can be read from Eq.~(\ref{eq:randomdimwit})
as $V^d=\frac{1}{\sqrt{k}} g$, $S^d=\sqrt{k} \1^d$, $W^d=\1^d$. The maximal quantum value $Q=k d$ is achievable
(Cor.~\ref{cor:quernormierung}). With probability one, the $k d$ measurement directions of party one and the $d$ measurement directions of
party two uniquely define a $d$-dimensional ellipsoid. Note that due to the orthogonality of $U_i$, more than $d (d+1)/2$ measurement
directions are used. Observables corresponding to real vectors spanning a space with dimension smaller than $d$
do not suffice to observe a maximal violation of such a Bell inequality and therefore it can be used as a dimension witness. The
number of measurement settings needed to witness dimension $d$ with this method is only $\mathcal{O}(d^2)$, while it is $\mathcal{O}(2^d)$ for the witness proposed in \cite{Vertesi2008},
see example~\ref{ex:Vertesi} in Appendix~\ref{app:examples}.
\end{example} 
In conclusion we introduced an approach for calculating upper bounds on the quantum value of correlation type Bell inequalities.
Computing the bound only requires the principal singular value of the coefficient matrix. We described how the tightness of
the bound can be tested. If the bound is reachable, which we find in several important examples, this method leads to optimal observables in
a natural way. Reversely, we showed how understanding the optimality conditions for our bound allows to construct Bell inequalities with
chosen properties, in particular properties of optimal observables, including their dimension.\\
The tools developed here may be useful to construct Bell inequalities with stronger violations than the known inequalities for
this scenario. Amongst other advantages, this may help to close the detection loophole in Bell test experiments. Furthermore, an improved
generalization of the bound for three and more parties is possibly of avail.
\begin{acknowledgments}
We thank Costantino Budroni, Otfried Gühne and Tobias Moroder for helpful discussions. M.E. is supported by Deutsche
Forschungsgemeinschaft
(DFG).
\end{acknowledgments}

\bibliographystyle{apsrev4-1} 
%

\appendix
\section{Proof of Theorem~\ref{thm:tightness}}\label{app:proofoftightness}
From the proof of Theorem~\ref{thm:bound} we know that
\begin{equation}
\sum_{x_1=1}^{M_1} \sum_{x_2=1}^{M_2} g_{x_1,x_2} E(x_1,x_2) = \vec{V}^T (g \otimes \1^{M_1+M_2}) \vec{W}
\end{equation}
where the real vectors $\vec{V}$ and $\vec{W}$ are defined in Eq.~(\ref{eq:defVundW}). From this we see, that the bound is reached, if and
only if
$\vec{V}$ and $\vec{W}$ are ``matching'' singular vectors to the maximal singular value, i.e. $(g\otimes\1^{M_1+M_2})
\vec{W}=\sqrt{M_2/M_1} ||g||_2 \vec{V}$, while at the same time the respective vectors $\vec{v}_i$ and $\vec{w}_j$ are unit vectors. The
normalization of $\vec{v}_i$ and
$\vec{w}_j$ is required by Tsirelson's theorem. General singular vectors to the maximal singular value can be written as
\begin{align}
\vec{V}=&\sum_{l_1=1}^{d} \sum_{l_2=1}^{M_1+M_2} \alpha_{l_1,l_2}  V_{*,l_1} \otimes
\1^{M_1+M_2}_{*,l_2},\label{eq:Vvortransform}\\
\vec{W}=&\sum_{l_1=1}^{d} \sum_{l_2=1}^{M_1+M_2} \beta_{l_1,l_2}  W_{*,l_1} \otimes
\1^{M_1+M_2}_{*,l_2},\label{eq:Wvortransform}
\end{align}
where $V_{*,l_1}$ denotes the $l_1$-th row of the matrix $V$ as a column vector and $\1^{M_1+M_2}_{*,l_2}$ denotes the $l_2$-th canonical
basis vector. The fact that $\vec{V}$ matches $\vec{W}$ becomes manifest in
\begin{equation}
 \alpha_{l_1,l_2}=\sqrt{\frac{M_1}{M_2}} \beta_{l_1,l_2}. \label{eq:alphabeta}
\end{equation}
We are interested in the components $\alpha_{l_1,l_2}$ introduced in Eqs.~(\ref{eq:Vvortransform}) and (\ref{eq:Wvortransform}). They are
restricted by the norm conditions for $\vec{v}_i$ and $\vec{w}_j$, which read
\begin{align}
 1=&||\vec{v}_i||^2=||{\alpha}^T V^d_{i,*} ||^2 \label{eq:normvialpha}\\
\mbox{and } 1=&||\vec{w}_i||^2=\sqrt{\frac{M_2}{M_1}}||{\alpha}^T W^d_{i,*} ||^2.\label{eq:normwibeta}
\end{align}
Therefore the bound is tight, if and only if this system of equations is solvable. We conclude by showing, how the number of columns of
$\alpha$ is related to the dimension of the measurement vectors. If and only if the bound is reachable with $d'$-dimensional vectors
$\vec{v}_i$, $\vec{w}_j$, the system of equations is solvable by a $d\times d'$-matrix $\alpha$, where $d'\leq d$.
\begin{description}
 \item[``$\Leftarrow$''] 
If $\alpha$ is a $d\times d'$-matrix that solves the system of equations, then 
\begin{align}
 d'\geq \text{rank}\,\alpha \geq \text{dim}\, \text{span}\{v_i,w_j\} ,
\end{align}
where the last $\geq$-sign holds because $\vec{v}_i=\alpha^T V^d_{i*}$ and $\vec{w}_i=\sqrt{\frac{M_2}{M_1}}\alpha^T W^d_{i*}$, i.e.
$\vec{v}_i$ and $\vec{w}_j$ lie in the image of $\alpha^T$. The result $\text{dim}\, \text{span}\{v_i,w_j\}\leq d'$ implies, that after some
appropriate rotation, $(\vec{v}_i)_k=0$ and $(\vec{w}_j)_l=0$ for $k,l>d'$ and all $i,j$. Therefore $\vec{v}_i$ and $\vec{w}_j$ can be
considered to be elements of $\mathds{R}^{d'}$. Observables associated with these $d'$-dimensional vectors permit maximal violation.
\item[``$\Rightarrow$''] If the bound is reachable with $d'$-dimensional vectors $\vec{v}_i$, $\vec{w}_j$, then all vectors $\vec{v}_i$ and
$\vec{w}_j$ lie on a
$d'$-dimensional unit sphere. Without affecting the mapping of the $\vec{v}_i$ and $\vec{w}_j$, the image of $\alpha$ can be chosen to
coincide with the $d'$ dimensional subspace spanned by $\vec{v}_i$ and $\vec{w}_j$, so the rank of $\alpha$ can be chosen to be $d'$. The
rank is equal to the number of nonzero singular values. The truncated singular value decomposition associated with all nonzero singular
values equals $\alpha$. Let us call it $\alpha=\tilde{V} \tilde{S} \tilde{W}^T$, so $\alpha
\alpha^T=\tilde{V} \tilde{S} \tilde{W}^T \tilde{W} \tilde{S} \tilde{V}^T=\tilde{V} \tilde{S} \tilde{S} \tilde{V}^T$, therefore
$\alpha'=\tilde{V} \tilde{S}$ is a $d\times d'$-matrix solving the system of equations. 
\end{description}
\section{Algorithm to find $\alpha$}\label{app:algorithm}
We want to find the solution $\alpha$ to the system of equations
\begin{align}
 ||\alpha^T V^d_{i,*} ||^2=&1 &\forall i\in\{1,2,...,M_1\} \label{eq:normv2}\\
 ||\alpha^T W^d_{j,*} ||^2=&\frac{M_1}{M_2}&\forall j\in\{1,2,...,M_2\}\label{eq:normw2}.
\end{align}
It is convenient to rewrite these equations as
\begin{align}
 A_{i,*}^T X A_{i,*}=&1 &\forall i\in\{1,2,...,M_1+M_2\} \label{eq:normA}
\end{align}
where $X=\alpha \alpha^T$ is unknown and 
\begin{equation}
A=\left(\begin{array}{c} V^d\\ \sqrt{\frac{M_2}{M_1}} W^d\end{array}\right)
\end{equation}
is a $(M_1+M_2)\times d$ matrix containing all the vectors $A_{i,*}$ which will be normalized after application of $\alpha^T$ (if
possible). Eq.~(\ref{eq:normA}) restricts $X$ on the space spanned by these vectors. Unaffected by their linear dependence, the unknown $X$
in Eq.~(\ref{eq:normA}) can be defined via it's action on these vectors,
\begin{equation}
 X A_{i,*}=\sum_k \tilde{c}_{i,k} A_{k,*}, \label{eq:actionofX}
\end{equation}
where $\tilde{c}_{i,k}$ is real. If $A_{i,*}$ and $A_{k,*}$ are
perpendicular, then we can choose $\tilde{c}_{i,k}=0$. Thus we use the form
\begin{equation}
 \tilde{c}_{i,k} = A_{k,*}^T A_{i,*} c_{i,k} \label{eq:formctilde}
\end{equation}
and Eq.~(\ref{eq:actionofX}) becomes
\begin{align}
 X A_{i,*}=&\sum_k c_{i,k} A_{k,*}^T A_{i,*} A_{k,*}\\
          =&\underbrace{\sum_k c_{i,k} A_{k,*} A_{k,*}^T}_{X} A_{i,*},
\end{align}
from which we can read $X$. This has to be the same $X$ for every equation in the system of equations (\ref{eq:normA}), i.e. $c_{i,k}=c_k$. 
We have
\begin{align}
X=&\sum_{k=1}^{M_1+M_2} c_{k} A_{k,*} A_{k,*}^T\\
 =&A^T \text{\;diag\,}(c_1,...,c_{M_1+M_2} ) A.\label{eq:Xvonc}
\end{align}
 Inserting this into Eq.~(\ref{eq:normA}) gives for all $i$
\begin{align}
1=&(A X A^T)_{ii}\\
 =&(P \text{\;diag\,}(c_1,...,c_{M_1+M_2} ) P)_{ii}\\
 =&\sum_{k=1}^{M_1+M_2} c_k P_{ik}^2. \label{eq:sumk}
\end{align}
Here we introduced the projector $P=A A^T$. We also introduce the matrix $Q$, which is $P$ componentwise squared, i.e. $Q_{ij}=P_{ij}^2$,
 and the vector $\vec{1}$, where every component is one. Then Eq.~(\ref{eq:sumk}) can be written as
\begin{equation}
Q\vec{c}=\vec{1}
\end{equation}
This equation is solvable if and only if
\begin{equation}
\vec{1}=Q Q^-\vec{1},
\end{equation}
where $Q^-$ is the pseudoinverse of $Q$. Then all solutions to this equation are given by
\begin{equation}
\vec{c}=\underbrace{Q^-\vec{1}}_{\vec{c}_0}+\underbrace{(\1-Q^- Q)\vec{y}}_{\vec{c}_y},
\end{equation}
with $\vec{y}\in\R^{M_1+M_2}$. Here we marked the $y$-independent and $y$-dependent part of $\vec{c}$. Inserting into
 Eq.~(\ref{eq:Xvonc}) gives a $y$-independent part and a $y$-dependent part of $X$, i.e.
\begin{equation}
 X=X_0+X_y.
\end{equation}
The vector $\vec{c}_y=(\1-Q^- Q)\vec{y}$ lies in the kernel of $Q$. Therefore for all $i\leq M_1+M_2$
\begin{align}
 0&=\sum_{k=1}^{M_1+M_2} Q_{ik} (c_y)_k \\
&= \sum_{l_1,l_2=1}^d A_{i l_1} A_{i l_2} \underbrace{\sum_{k=1}^{M_1+M_2}  A_{k l_1} A_{k l_2} (c_y)_k}_{(X_y)_{l_1,l_2}}\\
&=A_{i,*}^T X_y A_{i,*}.
\end{align}
This implies that $X_y=0$ and thus $X=X_0$ is uniquely defined by Eq.~(\ref{eq:normA}) and Eq.~(\ref{eq:formctilde}). We obtain a solution
$\alpha$ with $\alpha \alpha^T = X$ via
\begin{equation}
\alpha=\sqrt{X}.
\label{eq:solutionalpha}
\end{equation}
It is possible, that $X$ is not semipositive, in which case there is no real solution $\alpha$.\\
The described algorithm contains a singular value decomposition, the calculation of a pseudoinverse and a square root of a matrix, as well
as several matrix multiplications. The runtime complexities of all of these operations are asymptotically upper bounded by the matrix
dimension to the power of three~\cite{MatrixComputations}. Therefore the runtime complexity of this algorithm is $\mathcal{O}( (M_1+M_2)^3
)$.\\
A summarized pseudo code version of the described algorithm follows.
\begin{algorithmic}[1]
\Procedure{AlphaMatrix}{$g$}
\State $(V,S,W)\gets \Call{SVD}{g}$ \Comment{singular value decomposition of $g$}
\State $d\gets \max_{i:S_{ii}= S_{11}}  i$ \Comment{degeneracy of maximal singular value}
\State $V^d\gets V$ with columns $d+1$ to $M_1$ dropped \Comment{truncated SVD}
\State $W^d\gets W$ with columns $d+1$ to $M_2$ dropped
\State $A\gets \left(\begin{array}{c} V^d \\ \sqrt{\frac{M_2}{M_1}} W^d \end{array} \right)$ \Comment{the set of vectors on ellipsoid}
\State $P\gets A A^T$
\ForAll{$i,j\in\{1,2,...,M_1+M_2\}$} 
 \State $Q_{i,j} \gets P_{i,j}^2$
\EndFor
\State $\vec{c}\gets Q^- \vec{1}$ \Comment{apply pseudoinverse}
\If{$Q \vec{c} = \vec{1}$} \Comment{solution exists}
  \State $X\gets A^T \text{\;diag\,}(\vec{c}) A$
  \State $\alpha\gets \sqrt{X}$
  \If{Im($\alpha$)=0}
   \State\Return $\alpha$
  \Else \Comment{$X$ is not semipositive}
   \State\Return $0$ \Comment{only complex solutions}
  \EndIf
\Else
 \State\Return $0$ \Comment{equation not solvable}
\EndIf
\EndProcedure
\end{algorithmic}

\section{A collection of instructive examples}\label{app:examples}
This section contains more examples.
\begin{example}[CHSH inequality]\label{ex:chsh}
 The original Clauser-Horne-Shimony-Holt(CHSH) inequality~\cite{Clauser1969} is given by
\begin{equation}
 g=\left(\begin{array}{cc}
        1 &  1\\
        1 & -1
       \end{array}\right).
\end{equation}
As $g$ is symmetric, the singular values are given by the absolute values of it's eigenvalues, which is $\sqrt{2}$. Since all singular
values are equal, the bound in Ineq.~(\ref{eq:bound}) is tight (Corollary~\ref{cor:equalsingularvalues}, see also Fig.~\ref{fig:CHSH}).
It is the well-known upper bound of $T=Q=2\sqrt{2}$ for the quantum value of the CHSH-inequality derived by Tsirelson~\cite{Cirel'son1980}.
\end{example}
\begin{example}\label{ex:nottight}
 Consider the coefficients
\begin{align}
 g=& \left(
\begin{array}{cc}
 1 & 1 \\
 1 & 0 \\
 \end{array}
\right),
\end{align}
where the bound gives $T=1+\sqrt{5}$, but obviously only $3$ can be reached in any theory. Therefore the bound is not tight for this
instance of $g$.
\end{example}
\begin{example}[Binary digits]\label{ex:Vertesi}
 In Ref.~\cite{Vertesi2008} a bipartite Bell inequality given by coefficients
\begin{equation}
 g_{x_1,x_2}= 1-2(\lfloor 2^{1-x_2} (x_1-1) \rfloor \bmod 2)
\end{equation}
is discussed, which resembles a list of binary numbers. The number of measurement settings is given by $M_1=2^{M_2-1}$. It can be used to
witness observables referring to $d=M_2$ dimensional real vectors. Thus the number of measurement settings $M_1+M_2$ is $\mathcal{O}(2^d)$.
Bounds on the value of the Bell inequality are given in the reference.\\
It can be shown, that all singular values of $g$ are equal to $\sqrt{M_1}=\sqrt{2^{M_2-1}}$. A singular value decomposition of $g$ then is
\begin{equation}
 g =  \underbrace{\frac{1}{\sqrt{M_1}} g}_V \underbrace{\sqrt{M_1}\1^{M_2}}_S \underbrace{\1^{M_2}}_{W^T}.
\end{equation}
From this the diagonal solution $\alpha_i=\sqrt{M_1}{M_2}$ can be read. This implies, that the bound $T=M_1\sqrt{M_2}$ is tight.
\end{example}

\begin{example}[Braunstein-Caves inequalities]\label{ex:BraunsteinCaves}
The Braunstein-Caves inequalities~\cite{Braunstein1990} are given by
\begin{equation}
g_{x_1,x_2}=\left\{ \begin{array}{cl}
1 & \text{if } 0\leq x_1-x_2\leq 1 \\
-1 & \text{if } x_1=1 \text{ and } x_2=M \\
0 & \text{else}
\end{array}\right.,
\end{equation}
where $M=M_1=M_2$.
It can be shown that the maximal singular value of $g$ is $2 \cos(\pi/(2 M))$ and twofold degenerate. The bound reads $T=2
M\cos(\pi/(2 M))$, which is achievable~\cite{Braunstein1990,Cabello2009}. See also Fig.~\ref{fig:BC}.
\end{example}

\begin{example}[Greater Equal Function]\label{ex:GEQ}
The greater-equal-function is related to a Bell inequality with coefficients
\begin{equation}
 g_{x_1,x_2}=\left\{\begin{array}{cl}
                     1 & \text{if } x_1\geq x_2\\
                    -1 & \text{else}
                    \end{array}\right.,
\end{equation}
where $1\leq x_1,x_2\leq M_1=M_2=M$~\cite{Gisin1999}. The maximal singular value of $g$, $\csc(\pi/(2 M))$, is twofold
degenerate. The quantum bound $T=M\csc(\pi/(2 M)$ is tight (Corollary~\ref{cor:quernormierung}, see also Fig.~\ref{fig:GEQ}) and
strictly larger than the local hidden variable bound $B=\lceil M^2/2\rceil$. The violation $Q/B$ in the
limit of large $M$ is $4/\pi$~\cite{Gisin1999}.
\end{example}
\begin{figure}[tbp]
 \subfigure[CHSH Inequality (Ex.~\ref{ex:chsh})]{\includegraphics[width=0.3\linewidth]{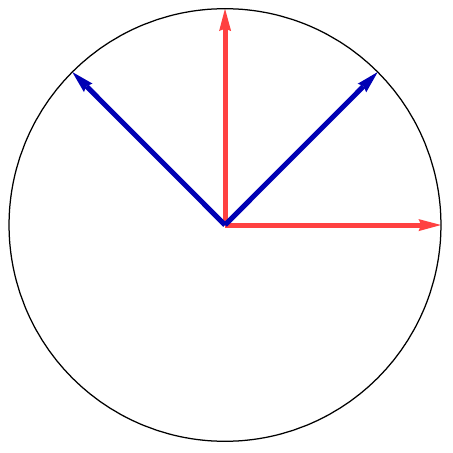} \label{fig:CHSH}}
 \subfigure[BC Inequality (Ex.~\ref{ex:BraunsteinCaves})]{\includegraphics[width=0.3\linewidth]{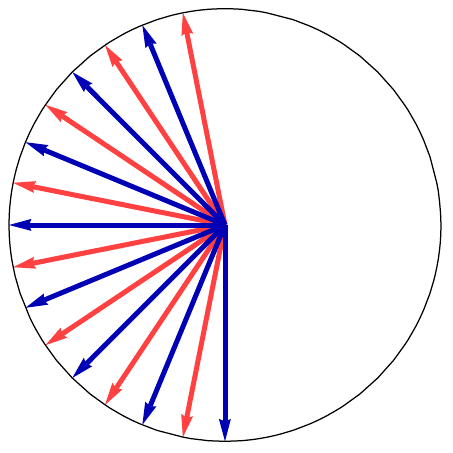} \label{fig:BC}}
 \subfigure[Gisin's Inequality (Ex.~\ref{ex:GEQ})]{\includegraphics[width=0.3\linewidth]{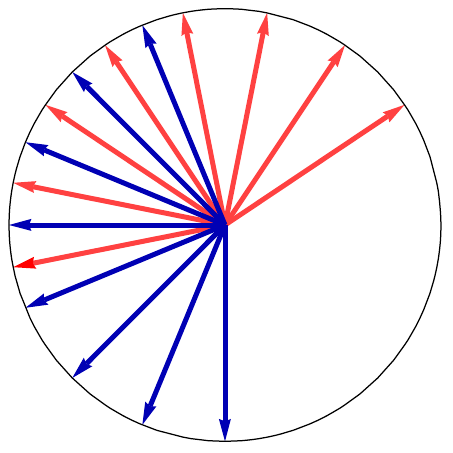} \label{fig:GEQ}}
 \caption{If and only if the bound is tight, the vectors $V^d_{i,*}$ (blue) and $\sqrt{\frac{M_2}{M_1}} W^d_{j,*}$ (red) lie on the
surface of an origin-centered Ellipsoid.}
 \label{fig:ellipsen}
\end{figure}
\begin{example}[Fishburn-Reeds]\label{ex:FishburnReeds}
The highest violation of an explicit bipartite correlation type Bell inequality known to the authors is given by Fishburn and Reeds
in~\cite{Fishburn1994}. They describe a series of Bell inequalities, which is constructed as follows. Construct a $k(k-1)\times k$-matrix
$F_k$, which rows constitute all vectors of the form $(0,...,0,-1,0,...,0,1,0,...,0)$ and $(0,...,0,1,0,...,0,1,0,...,0)$. The Bell
inequality is given by
coefficients
\begin{equation}
 g=F_k F_k^T-\frac{4}{3}\1.
\end{equation}
By construction, $g'=F_k F_k^T$ fulfills the conditions of Corollary~\ref{cor:quernormierung}. The diagonal modification changes the
singular values, without changing their order. Therefore also $g$ fulfills the conditions of Corollary~\ref{cor:quernormierung}. Because the
maximal singular value is $2(k-1)-4/3$, the maximal quantum value is $Q=T=(2 (k - 1) - 4/3) k (k - 1)$, which is the value derived in the
reference. The first $k$ for which $Q/B>\sqrt{2}$ is $k=5$, where $Q/B=\frac{10}{7}\approx 1.42857$. For $k=5$, the explicit form of $g$ is
\begin{widetext}
\begin{equation}
 g=\left( 
\begin{array}{cccccccccccccccccccc}
 \frac{2}{3} & 1 & 1 & 1 & 1 & 1 & 1 & 0 & 0 & 0 & 0 & 1 & 1 & 1 & 1 & 1 & 1 & 0 & 0 & 0 \\
 1 & \frac{2}{3} & 1 & 1 & 1 & 0 & 0 & 1 & 1 & 0 & 1 & 0 & 1 & 1 & -1 & 0 & 0 & 1 & 1 & 0 \\
 1 & 1 & \frac{2}{3} & 1 & 0 & 1 & 0 & 1 & 0 & 1 & 1 & 1 & 0 & 1 & 0 & -1 & 0 & -1 & 0 & 1 \\
 1 & 1 & 1 & \frac{2}{3} & 0 & 0 & 1 & 0 & 1 & 1 & 1 & 1 & 1 & 0 & 0 & 0 & -1 & 0 & -1 & -1 \\
 1 & 1 & 0 & 0 & \frac{2}{3} & 1 & 1 & 1 & 1 & 0 & -1 & -1 & 0 & 0 & 0 & 1 & 1 & 1 & 1 & 0 \\
 1 & 0 & 1 & 0 & 1 & \frac{2}{3} & 1 & 1 & 0 & 1 & -1 & 0 & -1 & 0 & 1 & 0 & 1 & -1 & 0 & 1 \\
 1 & 0 & 0 & 1 & 1 & 1 & \frac{2}{3} & 0 & 1 & 1 & -1 & 0 & 0 & -1 & 1 & 1 & 0 & 0 & -1 & -1 \\
 0 & 1 & 1 & 0 & 1 & 1 & 0 & \frac{2}{3} & 1 & 1 & 0 & -1 & -1 & 0 & -1 & -1 & 0 & 0 & 1 & 1 \\
 0 & 1 & 0 & 1 & 1 & 0 & 1 & 1 & \frac{2}{3} & 1 & 0 & -1 & 0 & -1 & -1 & 0 & -1 & 1 & 0 & -1 \\
 0 & 0 & 1 & 1 & 0 & 1 & 1 & 1 & 1 & \frac{2}{3} & 0 & 0 & -1 & -1 & 0 & -1 & -1 & -1 & -1 & 0 \\
 0 & 1 & 1 & 1 & -1 & -1 & -1 & 0 & 0 & 0 & \frac{2}{3} & 1 & 1 & 1 & -1 & -1 & -1 & 0 & 0 & 0 \\
 1 & 0 & 1 & 1 & -1 & 0 & 0 & -1 & -1 & 0 & 1 & \frac{2}{3} & 1 & 1 & 1 & 0 & 0 & -1 & -1 & 0 \\
 1 & 1 & 0 & 1 & 0 & -1 & 0 & -1 & 0 & -1 & 1 & 1 & \frac{2}{3} & 1 & 0 & 1 & 0 & 1 & 0 & -1 \\
 1 & 1 & 1 & 0 & 0 & 0 & -1 & 0 & -1 & -1 & 1 & 1 & 1 & \frac{2}{3} & 0 & 0 & 1 & 0 & 1 & 1 \\
 1 & -1 & 0 & 0 & 0 & 1 & 1 & -1 & -1 & 0 & -1 & 1 & 0 & 0 & \frac{2}{3} & 1 & 1 & -1 & -1 & 0 \\
 1 & 0 & -1 & 0 & 1 & 0 & 1 & -1 & 0 & -1 & -1 & 0 & 1 & 0 & 1 & \frac{2}{3} & 1 & 1 & 0 & -1 \\
 1 & 0 & 0 & -1 & 1 & 1 & 0 & 0 & -1 & -1 & -1 & 0 & 0 & 1 & 1 & 1 & \frac{2}{3} & 0 & 1 & 1 \\
 0 & 1 & -1 & 0 & 1 & -1 & 0 & 0 & 1 & -1 & 0 & -1 & 1 & 0 & -1 & 1 & 0 & \frac{2}{3} & 1 & -1 \\
 0 & 1 & 0 & -1 & 1 & 0 & -1 & 1 & 0 & -1 & 0 & -1 & 0 & 1 & -1 & 0 & 1 & 1 & \frac{2}{3} & 1 \\
 0 & 0 & 1 & -1 & 0 & 1 & -1 & 1 & -1 & 0 & 0 & 0 & -1 & 1 & 0 & -1 & 1 & -1 & 1 & \frac{2}{3}
\end{array}
\right).
\end{equation}
\end{widetext}
\end{example}
\end{document}